\newtheorem{theorem}{Theorem}
\newtheorem{claim}{Claim}
\newtheorem{corollary}{Corollary}
\title{\bf\Large Endogenous Network Structures \linebreak with Precision and Dimension Choices}
\author{Nikhil Kumar\footnote{Email: kumarnik@sas.upenn.edu. I am grateful to Kevin He for his guidance and many helpful discussions. I also thank Krishna Dasaratha, Rakesh Vohra, participants at the 2025 Midwest Theory Conference, and anonymous reviewers for their helpful comments and suggestions. All remaining errors and omissions are mine. }\\ 
\textit{University of Pennsylvania}}
\date{November 2025\vspace{-5ex}}
\begin{document}
\fontfamily{cmr}\selectfont
\maketitle
\vspace{0.2cm}

\normalsize \noindent {\bf Abstract:} This paper presents a social learning model using the DeGroot heuristic where agents form initial beliefs about a multi-dimensional state by choosing which dimension to learn about and exerting costly effort to acquire a signal. We then endogenize the network by assuming that agents’ initial information-acquisition choices affect how much weight they put on each other during the repeated belief updating. Agents not only choose the precision of their signals and what dimension of the state to learn about, but these decisions directly determine the underlying network structure on which social learning occurs. We show that under a fixed network structure, the optimal precision choice is sublinear in the agent's stationary influence in the network, and this individually optimal choice is smaller than the socially optimal choice by a factor of $n^{1/3}$ and a cost-function dependent ratio. The welfare gap grows quadratically in the number of agents. Under a dynamic network structure, we specify the network by defining a kernel distance between agents, determining how much weight agents place on one another. Agents choose dimensions to learn about such that their choice minimizes the squared sum of influences of all agents: a network with equally distributed influence across agents is ideal. 
\medskip
\renewcommand*{\thefootnote}{\arabic{footnote}}

\noindent \textit{Keywords:} \normalsize Social learning, DeGroot updating, precision choice, dynamic network structures, repeated interactions

\newpage

\section{Introduction}

Social learning allows groups to aggregate diverse information and learn efficiently on an underlying network. Learning agents face two intertwined choices: how much effort to invest in acquiring private information and what exactly to learn about. In many collaborative environments, agents then combine these private signals with social information. Relying on their peers’ opinions, agents specify weights on other opinions based on an underlying network structure, which then determines how social learning occurs. 

Information in networks is thus spread and aggregated through two main channels: private information and social information. Getting beneficial private information is costly, and putting in more effort to learn should translate to more precise information. Each agent can separately choose how much effort to exert, mapping into a precision choice and thus a level of private information. On the other hand, the spread of social information is directly specified by the overall network structure. The manner in which this information sharing occurs across a network, however, is very context-dependent. In particular, individual agent decisions can shape the underlying network structure. Information sharing is more likely between agents who choose to learn about similar things and so the underlying network structure on which social learning occurs should be dependent on agents' behavior. 

This paper presents a model that captures both of these phenomena under a DeGroot learning heuristic: $i)$ agent choices over both precision and which dimension of the state to learn about, $ii)$ an endogenous network structure dependent on agents' learning behavior. Most current literature in social learning and learning in networks focuses on a fixed network structure and an endowed signal to each agent: we consider relaxations of both. We utilize the DeGroot learning rule due to its tractability and intuitive interpretation, where agents update their opinions as a convex combination of their neighbors' opinions.

We first present a simple network learning structure and consider adaptations of the model in order of increasing complexity. The simplest model is one with a single-dimensional state, a static network structure, and a one-time learning decision. Agents then choose how much effort they put into learning, explicitly choosing the precision of the signal they receive and incurring higher costs for more precise signals. We show that the optimal choice of precision is sublinear in the agent's stationary influence in the network. In particular, under affine cost functions, the difference between the individually optimal and socially optimal precision choice is a factor of $n^{1/3}$. We also present examples of common network structures and the corresponding optimal precision choices on such networks. 

%We then extend the underlying state to be multi-dimensional where agents have the ability to choose which dimension of the state they learn about. Agents receive imprecise but consistent signals about each dimension of the state, but can then choose a particular dimension to specialize in and learn more about. Under this structure, agents are indifferent on their dimension choice. We also then provide a couple of applications of the model. First, we introduce the notion of specialists and generalists, where agents have a tradeoff between the precision of their information and how much they can learn. This section also contributes to the multiplexing literature, where we claim that agents should choose to learn about the dimension on which they have the greatest stationary influence.

We then extend the underlying state to be multi-dimensional where agents have the ability to choose which dimension of the state they learn about. Agents receive imprecise but consistent signals about each dimension of the state, but can then choose a particular dimension to specialize in and learn more about. Under this structure, agents are indifferent on their dimension choice. We provide several applications of this model. Specifically, we introduce the notion of specialists and generalists, where agents have a tradeoff between the precision of their information and how much they can learn. Further, related to the multiplexing literature, we demonstrate that agents should choose to learn about the dimension on which they have the greatest stationary influence.

Finally, we consider the case in which the network structure is endogenously formed: agents first choose dimensions of the state to learn about, and based on those choices the network structure is formed. We posit that agents who choose to learn about similar dimensions place higher weights on each other's opinions, implying that network formation stems from a measure of informational  homophily between agents. Under this structure, we show that agents choose dimensions to learn about such that the squared sum of influence across agents is minimized. This result resembles the \lq\lq wisdom of the crowd" phenomenon but without the need for an overarching social planner. 

Learning on an endogenously formed network structure is further extended to the case of repeated interactions, where the network structure is based on similarities in agents' past dimension choices. Under a martingale assumption on dimension choice beliefs of other agents, the resulting optimal dimension choice is a direct application of the single iteration case. 

\subsection{Related Literature}

These results contribute to the large and growing research on social learning. The literature can be broken into two general threads: DeGroot learning and sequential social learning with Bayesian agents. This paper combines the DeGroot learning rule with ex-ante Bayesian maximization, where agents optimize knowing how their decisions will influence learning in the social network. DeGroot learning is a long standing area of research, where agents update their opinions as a linearly weighted sum of their neighbors' opinions \citep{degroot1974reaching}. This learning heuristic is well-discussed in the literature, with many later papers formalizing the informational environment under which DeGroot learning takes place (\citealp{demarzo2003persuasion}; \citealp{acemoglu2011opinion}; \citealp{golub2016learning}). 

Another thread of the social learning literature focuses on sequential learning, and a smaller subset of this literature focuses on agents choosing the precision of their own private signals. Specifically, \citet{mueller2016social} analyze social learning with costly search and show that asymptotic learning occurs as long as costs eventually approach zero. Further, \citet{ali2018herding} considers the question of costly information acquisition through a herding approach, showing that agents only choose to acquire information when they have a positive probability of changing the existing consensus. We formalize this notion of precision choice and allocated effort under the DeGroot learning heuristic.

\citet{sadler2021games} also discuss the notion of endogenous network formation. However, their model allows agents to choose both their connections and effort levels directly whereas in our framework agents choose their effort and the network structure is specified endogenously from these choices. We argue that explicitly choosing specific neighbors is often infeasible from an agent’s perspective; instead, we suggest that such choices can be captured indirectly through agents’ actions.

This paper also contributes to the new and recently growing literature on multiplexing in economic networks. \citet{chandrasekhar2024multiplexing} formally introduce the notion of multiplexing in networks and analyze diffusion with multiplex networks and an SIS model. \citet{candogan2025network} consider an extension of the disease spread model where agents can take actions on a given multiplexed network. We extend this strand of multiplexing literature and analyze the optimal agent choices when learning about an underlying state under a multiplexed network. We consider the case in which the underlying state is multi-dimensional, and the multiplexed network stems from different levels of information spreading across dimensions. 

Finally, our paper contributes to the more general literature on learning in dynamic environments and repeated interactions within such environments.  \citet{dasaratha2023learning} present a model with dynamic DeGroot learning, where Bayesian agents learn about a dynamically changing state. \citet{huang2024learning} analyze repeated interactions with long-lived
agents and show that the equilibrium speed of learning is upper bounded by the precision of the bounded signals. In this paper, we address similar core questions but consider a framework in which the network itself is constructed based on how agents choose to learn.

\section{The Model}
\label{model}

We analyze a social learning framework under DeGroot updating with costly information acquisition. Let $N = \{1, 2, ...n\}$ denote the finite set of agents. Agents are trying to learn the true value of a single-dimensional state $\theta$, and they learn and update under a  network $\mathcal{W}$. The network, or directed weight matrix, is common knowledge across all agents. Furthermore, assume that a stationary distribution exists under this network structure: i.e., there exists a distribution $ \pi \text{ such that } \pi \mathcal{W} = \pi, \sum_i \pi_i = 1$, and consequently that the network is strongly connected and aperiodic so that beliefs do indeed converge.\footnote{This assumption is not overly restrictive: imposing a uniform but very small positive lower bound $\epsilon > 0$ on all agent-pair influences implies the existence of a unique and positive stationary distribution.} 

\subsection{Precision Choice on a Fixed Network Structure}
\label{sec:precisionchoice}

We first consider the case under which the network structure $\mathcal{W}$ is fixed. Each agent then chooses a level of precision $\tau_i$ subject to a strictly increasing and convex cost function $c_i(\tau_i)$ to maximize the accuracy of the eventual network consensus. Each agent then receives a independent signal $s_i \sim \mathcal{S}$ dependent on their precision choice.  A higher level of effort corresponds to a more precise received signal, but effort is costly. We work with unbiased normal signals, where a choice of precision level $\tau_i$ implies agent $i$'s signal is a realized draw from $\mathcal{N}\left(\theta, \frac{1}{\tau_i^2}\right)$.

Each agent's utility function $u_i$ (or equivalently their loss function $\ell_i$) is a function of the network consensus accuracy and their individual cost incurred: $ -u_i(\tau_i)= \ell_i(\tau_i)= \mathbb{E}[(\hat{\theta}-\theta)^2] + c_i(\tau_i)$ where the expectation is with respect to all other agents' precision choices $\tau_1,..., \tau_{i-1}, \tau_{i+1}, ..., \tau_n$.\footnote{A more general utility function would allow an explicit trade-off of  network consensus error and individual cost functions. Let  $\alpha_i, \beta_i > 0$ denote the weights an agent places on consensus errors and individual costs respectively, yielding the loss function $\alpha_i \cdot \mathbb{E}[(\hat{\theta}-\theta)^2] + \beta_i \cdot c_i(\tau_i)$. However, this only adds a constant $\frac{\beta_i}{\alpha_i} > 0$ factor to the cost function, which a monotonic transformation and yields analogous analysis.}

Social learning occurs under the DeGroot updating heuristic: agents update their beliefs by taking a weighted average of their neighbors' beliefs according to the network $\mathcal{W}$. Under the existence of a stationary distribution of $\mathcal{W}$, the network consensus $\hat{\theta}$ can be determined solely by agents' initial signal realizations $s_i$ and the stationary distribution $\pi$ of $\mathcal{W}$: $\hat{\theta} = \pi^\top s = \sum_{k=1}^n \pi_k s_k$.

We then determine equilibrium precision choices for each agent by solving the corresponding minimization problems. Each agent chooses a precision level to minimize their loss, i.e. the sum of the network consensus error and their precision costs: 
$$\min_{\tau_i \ge0} \; \mathbb{E}[(\hat{\theta}-\theta)^2] + c_i(\tau_i)$$
The individual signal variances are a sufficient statistic for analyzing the network consensus, and this yields the following equilibrium result. 

\begin{theorem}[\text{Optimal Precision Choice Under Fixed Network Structure, One-Dimensional State}]
Under normal signals, each agent's optimal choice of precision is increasing in their influence in the network, but at a sublinear rate. Formally, each agent $i$ chooses an individually optimal precision choice of $$\tau_i = \biggl(\frac{2 \pi_i^2}{c_i'(\tau_i)} \biggr)^{1/3}$$

\label{sublinear}
\end{theorem}

\begin{proof}
    See Appendix \ref{thm1_DERIV} for details. Follows from independence and consistency of signals and first-order conditions.  
\end{proof}

Under an increasing cost function, this theorem illustrates that an agent should choose a higher precision if they have more influence in the network. Note that under linear costs, we have a direct proportionality of precision choice and network influence: $\tau_i^3 \propto  2 \pi_i^2 \Rightarrow \tau_i^* \propto 2 \pi_i^{2/3}$. 

Theorem \ref{sublinear} presents a dominant strategy for each agent in the network, but this behavior is not necessarily optimal from a social planner's point of view. Costs are effectively bundled together, and the social planner chooses a vector of precision choices for each agent to minimize the total objective value across all n agents:

\begin{equation}
    \min_{\tau_1, \tau_2, ... \tau_n \ge0} \; n \biggl[ \sum_{k=1}^n \pi_k^2 \cdot \mathrm{Var(s_k)} \biggr]+ \sum_i c_i(\tau_i) \label{socialplanner}
\end{equation}

\noindent which under normal signals, simplifies to the following expression:
$$\min_{\tau_1, \tau_2, ... \tau_n \ge0} \; n \biggl[\sum_{k=1}^n \frac{\pi_k^2}{\tau_k^2} \biggr]+ \sum_i c_i(\tau_i)$$

\begin{corollary}[Socially Optimal Precision Choice]
    Each agent's socially optimal precision choice is higher than the individually optimal choice in Theorem \ref{sublinear}. In particular, the ratio of the two must be bounded:
    $$1 \leq \frac{\tau_i^{social}}{\tau_i} \leq n^{1/3}$$ \label{corollary:socialplanner}
\end{corollary}
\begin{proof}[Proof Sketch:]
    Solve for first-order conditions and use convexity of costs. See details in Appendix \ref{pf:socialplanner}. 
\end{proof}

Under an affine cost function, the ratio of cost functions disappears and thus $\tau_i^{social} = \tau_i \cdot n^{1/3} $. Thus, the upper bound is reached under linear cost functions; the lower bound is trivially obtained with one agent. The gap between the two is a function of the number of agents and the steepness of the cost function, i.e. the elasticity of marginal costs. 

Under a cost function of a form $c_i(\tau_i) = \tau_i^{\alpha}$, the ratio of precision choices can be shown to be $\frac{\tau_i^{social}}{\tau_i} = n ^{\frac{1}{2 + \alpha}}$ (see Appendix \ref{pf:socialplanner} for details). A steeper cost function implies a smaller gap between precision choices, and we observe an instance of the free-rider problem that decays with the steepness of the cost function. Agents exert lower effort \textemdash \ choosing lower precision levels \textemdash \ believing that other agents will choose high enough precision levels to improve the network consensus. From a social planner's perspective, an agent exerting effort is directly beneficial for the whole network as it reduces consensus variance. Thus, specific agents may bear higher individual costs, but overall consensus variance will simultaneously fall for all agents leading to a better social outcome. Note that, however, this gap does not fully die out with a larger number of agents, and instead grows sub-linearly in $n$. 

Another interesting result following from Theorem \ref{sublinear} and Corollary \ref{corollary:socialplanner} is that the gap between the individually optimal and socially optimal outcomes is independent of the network structure. Optimal precision choices are increasing in an agent's network influence under both the individually and socially optimal outcome, but the difference between the two is dependent only on the number of agents and marginal cost elasticity. 

Our above results analyze the difference in precision choices; a natural question to ask is how this gap affects social welfare. 

\begin{claim}[Social Welfare Gap]
    The welfare gap between the individually optimal precision choices $\tau_i$ and the socially optimal precision choices $\tau_i^{social}$ is of order $\Theta(n^2)$. 
\end{claim}

\begin{proof}[Proof Sketch:]
    Define the welfare gap induced by each agent $i$ as $\delta_i$, which is the difference between their contribution to welfare under their two precision choices: 
    $$\delta_i = n \frac{\pi_i^2}{\tau_i^2} + c_i(\tau_i) - \Biggl(n \frac{\pi_i^2}{(\tau_i^{social})^2} + c_i(\tau_i^{social}) \Biggr) $$
    Use first-order conditions, convexity of costs, and Corollary \ref{corollary:socialplanner} to bound each individual's welfare gap, and summing over all $n$ agents gives the result above. See Appendix \ref{pf:welfareanalysis}.
\end{proof}

In particular, this result implies that individually optimal equilibria under social learning can be very problematic; the welfare gap between the individually optimal and socially optimal outcome grows at a quadratic rate in the number of agents. 

\subsubsection{Examples Across Standard Network Topologies}
\label{section:basicexamples}

Our results above indicate that the gap between optimal precision choice in the individually optimal and socially optimal case is independent of the network structure. We give some examples and discuss the implications below. 

First consider a fully connected network with $n = |\mathcal{N|}$ agents where each agent $i$ puts weight $x_i$ on themselves, and splits the remaining weight evenly on all other agents. The network is thus fully specified by a vector $x \in \mathbb{R}^n$, which specifies how much weight each agent puts on their own opinion. Figure \ref{fig:n-person-connected} illustrates the corresponding weight matrix and network structure for $n=8$. 

\begin{figure}[h!]
  \centering
  \begin{minipage}{0.45\textwidth}
  \Large
    \[
      W \;=\;
      \begin{pmatrix}
        x_1               & \tfrac{1 - x_1}{\,n-1\,} & \cdots & \tfrac{1 - x_1}{\,n-1\,} \vspace{0.2cm} \\
        \tfrac{1 - x_2}{\,n-1\,} & x_2               & \cdots & \tfrac{1 - x_2}{\,n-1\,} \\
        \vdots            & \vdots            & \ddots & \vdots \\
        \tfrac{1 - x_n}{\,n-1\,} & \tfrac{1 - x_n}{\,n-1\,} & \cdots & x_n
      \end{pmatrix}
    \]
  \end{minipage}
  \quad
  \normalsize
  \begin{minipage}{0.45\textwidth}
    \centering
    \begin{tikzpicture}[scale=1,
        node/.style={circle, draw, minimum size=8mm, fill=blue!20},
        edge/.style={-, thick},
        loopedge/.style={->, thick},
        lab/.style={fill=white, inner sep=1pt},
      ]
      \def\n{8}
      \def\r{2cm}
      \foreach \i in {1,...,\n}{
        \node[node] (N\i) at ({90 - 360*(\i-1)/\n}:\r) {\(\i\)};
      }
      \foreach \i in {1,...,\n}{
        \foreach \j in {1,...,\n}{
          \ifnum\i<\j
            \draw[edge] (N\i) -- (N\j);
          \fi
        }
      }
      \draw[loopedge] (N1) to[loop above]   node[lab] {\(x_{1}\)} (N1);
      \draw[loopedge] (N2) to[loop above]   node[lab] {\(x_{2}\)} (N2);
      \draw[loopedge] (N3) to[loop right]    node[lab] {\(x_{3}\)} (N3);
      \draw[loopedge] (N4) to[loop right]    node[lab] {\(x_{4}\)} (N4);
      \draw[loopedge] (N5) to[loop below]   node[lab] {\(x_{5}\)} (N5);
      \draw[loopedge] (N6) to[loop left]   node[lab] {\(x_{6}\)} (N6);
      \draw[loopedge] (N7) to[loop left]    node[lab] {\(x_{7}\)} (N7);
      \draw[loopedge] (N8) to[loop above]   node[lab] {\(x_{8}\)} (N8);
    \end{tikzpicture}
  \end{minipage}
  \caption{A general $n$-agent complete network weight matrix parameterized by self‐weights $x_i$ and a corresponding simplified network diagram for $n=8$.}
  \label{fig:n-person-connected}
\end{figure}

The stationary distribution and precision choice (from Theorem \ref{sublinear}) under this network structure for a general $n$ are:
$$\pi_i = \frac{\frac{1}{1-x_i}}{\sum\limits_{k=1}^n \frac{1}{1-x_k}} \text{ and } \tau_i = \left(\frac{2\pi_i^2}{c_i'(\tau_i)}\right)^\frac{1}{3}$$
and thus, the precision choice is strictly increasing in an agent's self-weight $x_i$. An agent with stronger influence in the network is thus willing to exert more effort to improve the network consensus. Details are given in Appendix \ref{proof:claimopt_prec}. 

A complete network structure yields a fully symmetric stationary distribution. We can also consider asymmetric network structures with a central highly influential agent. Figure \ref{fig:cp-star-networks} presents two examples of such networks: a core-periphery network with a single core agent, and a star network. The core-periphery network presents a network under which periphery agents are connected to one another and equally weight themselves, their two adjacent neighbors, and the central agent. However, in the star network, the only mode of information diffusion is via the central agent. In both cases, the central agent can be viewed as an expert, and she naturally has a persistent level of influence:
$$\pi_n^{cp} = \frac{n}{5n-4} \text{ and } \pi_n^{star} = \frac{n}{3n-2}$$
The exact details and complete stationary distributions can be found in Appendix \ref{proof:core-perip-claim}.

\begin{figure}[h!]
  \centering
  %Core-periphery
  \begin{minipage}{0.45\textwidth}
    \centering
    \begin{tikzpicture}[
        periph/.style={circle,draw,minimum width=8mm,minimum height=8mm,fill=blue!30,inner sep=1pt,font=\small},
        core/.style={circle,draw,minimum width=8mm,minimum height=8mm,fill=red!30,inner sep=1pt,font=\small},
        interedge/.style={-,thick},
        loopedge/.style={->,thick},
        lab/.style={fill=white,inner sep=1pt,font=\tiny},
      ]
      \node[periph] (1)  at (90:2)   {1};
      \node[periph] (2)  at (30:2)   {2};
      \node[periph] (3)  at (-30:2)  {3};
      \node[periph] (n3) at (-90:2)  {\(n\!-\!3\)};
      \node[periph] (n2) at (-150:2) {\(n\!-\!2\)};
      \node[periph] (n1) at (150:2)  {\(n\!-\!1\)};
      \node[core]   (n)  at (0,0)    {\(n\)};
      % ring edges
      \draw[interedge] (1)--(2);
      \draw[interedge] (2)--(3);
      \draw[interedge] (n3)--(n2);
      \draw[interedge] (n2)--(n1);
      \draw[interedge] (n1)--(1);
      \draw[interedge,loosely dotted,thick] (3)--(n3);
      % main connections
      \foreach \u in {1,2,3,n3,n2,n1}{
        \draw[interedge] (\u)--(n);
      }
      % self-loops
      \foreach \u/\pos in {1/above,2/right,3/right,n3/below,n2/left,n1/left}{
        \draw[loopedge] (\u) to[loop \pos] node[lab] {1/4} (\u);
      }
      \draw[loopedge] (n) to[out=10,in=60,looseness=7] node[lab] {\(\tfrac1n\)} (n);
    \end{tikzpicture}
    \subcaption{Core–periphery network}
  \end{minipage}
  \hspace{0.05\textwidth}
  %Star
  \begin{minipage}{0.45\textwidth}
    \centering
    \begin{tikzpicture}[
        periph/.style={circle,draw,minimum size=8mm,fill=blue!30,inner sep=1pt,font=\small},
        core/.style={circle,draw,minimum size=8mm,fill=red!30,inner sep=1pt,font=\small},
        interedge/.style={-,thick},
        loopedge/.style={->,thick},
        lab/.style={fill=white,inner sep=1pt,font=\tiny},
      ]
      \node[periph] (1)   at (90:2)   {1};
      \node[periph] (2)   at (30:2)   {2};
      \node[periph] (3)   at (-30:2)  {3};
      \node[periph] (n3)  at (-90:2)  {\(n\!-\!3\)};
      \node[periph] (n2)  at (-150:2) {\(n\!-\!2\)};
      \node[periph] (n1)  at (150:2)  {\(n\!-\!1\)};
      \node[core]   (n)   at (0,0)    {\(n\)};
      % edges
      \foreach \u in {1,2,3,n3,n2,n1}{
        \draw[interedge] (\u)--(n);
      }
      % self-loops
      \foreach \u/\pos in {1/above,2/right,3/right,n3/below,n2/left,n1/left}{
        \draw[loopedge] (\u) to[loop \pos] node[lab] {1/2} (\u);
      }
      \draw[loopedge] (n) to[out=10,in=60,looseness=7] node[lab] {\(\tfrac1n\)} (n);
    \end{tikzpicture}
    \subcaption{Star network}
  \end{minipage}
  \caption{Comparison of the two network topologies: (a) core–periphery structure and (b) star structure, each with $n$ agents.}
  \label{fig:cp-star-networks}
\end{figure}

In such networks, we recover the expected intuition: an agent that has more influence in a social network should exert more effort to improve the network consensus. Note that the central agent is also interested in learning the true state; an interesting question to ask is what happens if the central agent is adversarial. We discuss related questions in Section \ref{conclusion}.

\subsection{Multi-Dimension Choice on a Fixed Network Structure}
\label{multi-dim:sec}

We now assume that the network structure is fixed and all agents have the same precision $\tau$ but extend the state to higher dimensions. The agent's choice now becomes a choice of a single element of the state to learn about and correspondingly receive a signal on. Since there are multiple dimensions of the state, the DeGroot updating is done independently on each dimension.

The true multi-dimensional state $\theta \in \mathbb{R}^m$, and so the network consensus as a result of DeGroot updating $\hat{\theta} \in \mathbb{R}^m$ is constructed in a similar manner to above. Rather than each agent's initial opinion being a scalar (which in 2.1 is equivalent to their signal realization), each agent now has an opinion/belief vector which they update according to their neighbors. We implicitly assume that the number of agents in the network is much larger than the dimension of the state, meaning that learning occurs sufficiently for each dimension.

To maintain consistency of the overall network consensus, rather than assuming an improper prior where agents have a prior of $\theta_j = 0$ for dimensions they do not learn about, we assume every agent learns about all other dimensions but just with very low precision. This resembles real-world phenomena: an individual may just see a single advertisement or social media post about an underlying state (company valuation, news, etc) and form their belief solely based on that noisy piece of information. In particular, if agent i chooses to learn about dimension $d_i$, for all other dimensions $d_j$, their estimate of that dimension will be a sample from a very high variance distribution around the true state dimension $\theta_j$.\footnote{This ensures that the network estimate on every dimension is consistent in asymptotics. For example, repeated DeGroot updating on a dimension with one unbiased signals and others with biased priors will lead to an entirely incorrect consensus with unbounded error.}

All agents share the same precision $\tau > 0$, and the network weight matrix $W$ has a stationary distribution $\pi$.  Each agent $i$ then chooses one coordinate $d_{i}\in\{1,\dots,m\}$ on which to sample a signal:
\[
s_{i,d_{i}}\;\sim\;\mathcal N\!\bigl(\theta_{d_i},\,1/\tau^2\bigr),
\]
and for all other dimensions $j\neq d_i$, the agent receives a very noisy but consistent signal about the state: $\forall d_j \neq d_i, s_{i, d_j} \sim \mathcal{N}\left(\theta_{d_j}, 1/\underline{\tau}^2\right)$ where $\underline{\tau} << \tau$. 
Therefore, each agent has an initial estimate of the state of the form $e_{i} = [s_{i,1}, s_{i, 2}, ... s_{i, d_i}, ..., s_{i, m}]^\top$, which captures agent $i$ choosing to learn about only dimension $d_i$ of the state and receiving very noisy signals about all other dimensions. Their optimization problem now stems from choosing an optimal decision to learn about rather than a choice of signal precision.

If the network structure is uniform across state dimensions (i.e. the network structure on which DeGroot learning occurs is identical for all state dimensions), agents are trivially indifferent on which dimension of the state they learn about. No agent will have an incentive to specifically choose a certain dimension to learn about a priori. Furthermore, the network consensus is unbiased by a standard law of large numbers argument: $\hat{\theta}_j \rightarrow \theta_j \; \forall j \in \{1, ..., m\}$ as $n \rightarrow \infty$. This implies that all agents choosing the same dimension is indeed a dominant strategy, as is a uniformly random choice of dimension to learn about. 

\subsubsection{Multiplexed Network Structures}

We now briefly consider the case in which networks are distinct on different dimensions of the underlying state. Similar to the notion of multiplexing in networks introduced by \citet{chandrasekhar2024multiplexing}, two agents can be connected on many different layers and to different degrees. 

Formally, define a layer as a dimension of the underlying state agents are learning about. Each layer then corresponds to a distinct network structure, and thus a distinct stationary distribution. Let $\pi_i^j$ denote agent $i$'s stationary influence on the network layer corresponding to dimension $j$. An agent wants to choose a dimension to learn about based on the fixed but distinct network structures on different dimensions. 

Under such multiplexed networks, we claim that agents should choose to learn about the dimension on which they have the highest stationary influence:
$$d_i \in \arg \min_d \sum\limits_{j=1}^m \mathbbm{1} \{d = j\} \; \cdot  \pi_i^j$$

This implies that an agent chooses to learn more about dimensions on which they are regarded as experts. In particular, experts are agents whose opinions are valued more by the rest of the network, meaning that they have higher stationary influence. Even when all agents have identical precision levels, though, an agent should still choose to learn about the dimension on which they have the highest influence, even if they possess no comparative advantage over other agents.

Consider the following illustrative example: let $m=3$ and $n \geq 4$. 
Assume the three state dimensions have the three example networks discussed in Section \ref{section:basicexamples}: $d_1$: complete network, $d_2$: core-periphery/ring network, $d_3$: star network.\footnote{Without loss of generality, denote the central agent on dimension 2 and 3 to be agents n and 1 respectively.} Each agent will choose the dimension on which they have the highest corresponding stationary influence, and thus:
$$d_1 = 3, d_n = 2, \text{ and } d_i = 1 \;\;\forall i \in \{2, ..., n-1\}$$

\begin{figure}[!htbp]
  \centering
  \tikzset{
    baseNode/.style={circle, draw, minimum size=8mm, fill=blue!20, font=\small, inner sep=0pt},
    chosenNode/.style={circle, draw=black, ultra thick, minimum size=8mm, fill=yellow!60, font=\small, inner sep=0pt},
    edge/.style={-, thick},
  }
  \def\n{8} % num agents
  \def\r{2.0cm}

 % dim 1
  \begin{subfigure}[b]{0.32\textwidth}
    \centering
    \begin{tikzpicture}[scale=1]
      \foreach \i/\lbl in {1/{1},2/{2},3/{3},4/{n\!-\!3},5/{n\!-\!2},6/{n\!-\!1},7/{n}}{
        \ifnum\i=1
          \node[baseNode] (C\i) at ({90 - 360*(\i-1)/7}:\r) {\(\lbl\)};
        \else
          \ifnum\i=7
            \node[baseNode] (C\i) at ({90 - 360*(\i-1)/7}:\r) {\(\lbl\)};
          \else
            \node[chosenNode] (C\i) at ({90 - 360*(\i-1)/7}:\r) {\(\lbl\)};
          \fi
        \fi
      }
      \foreach \i in {1,...,7}{
        \foreach \j in {1,...,7}{
          \ifnum\i<\j
            \draw[edge] (C\i) -- (C\j);
          \fi
        }
      }
    \end{tikzpicture}
    \vspace{2mm}
    \caption*{Dimension 1: Complete network}
  \end{subfigure}
  \hfill
 % dim 2
  \begin{subfigure}[b]{0.32\textwidth}
    \centering
    \begin{tikzpicture}[scale=1]
      \node[chosenNode] (Kc) at (0,0) {\(n\)};
      \foreach \i/\lbl in {1/{1},2/{2},3/{3},4/{n\!-\!3},5/{n\!-\!2},6/{n\!-\!1}}{
        \node[baseNode] (K\i) at ({90 - 360*(\i-1)/6}:\r) {\(\lbl\)};
        \draw[edge] (K\i) -- (Kc);
      }
      \foreach \i in {1,...,5}{
        \pgfmathtruncatemacro{\j}{\i+1}
        \draw[edge] (K\i) -- (K\j);
      }
      \draw[edge] (K1) -- (K6);
    \end{tikzpicture}
    \vspace{2mm}
    \caption{Dimension 2: Core–periphery}
  \end{subfigure}
  \hfill
  % dim 3
  \begin{subfigure}[b]{0.32\textwidth}
    \centering
    \begin{tikzpicture}[scale=1]
      \node[chosenNode] (Sc) at (0,0) {\(1\)};
      \foreach \i/\lbl in {1/{2},2/{3},3/{n\!-\!3},4/{n\!-\!2},5/{n\!-\!1},6/{n}}{
        \node[baseNode] (S\i) at ({90 - 360*(\i-1)/6}:\r) {\(\lbl\)};
        \draw[edge] (S\i) -- (Sc);
      }
    \end{tikzpicture}
    \vspace{2mm}
    \caption{Dimension 3: Star network}
  \end{subfigure}

  \caption{Dimension choices to learn about: agent $n$ picks dimension 2, and agent $1$ picks dimension 3, and all others choose dimension 1. Yellow nodes indicate the agents who choose that corresponding dimension.}
  \label{fig:three-dim-choice-final}
\end{figure}

In such networks, different dimensions may be viewed as different areas of expertise, and so agent $i$ may have a very strong influence on the first dimension (i.e. high $\pi_i^1$) but may be much less influential on the second dimension (small $\pi_i^2$).

\subsection{Varying Precision Choices}
\label{varying-prec-multidim}

\subsubsection{Single Precision Parameter}

We now analyze the case in which precision parameters are not fixed across agents: i.e. the choice of each agent is a joint optimization problem over $\tau$ and $d$. Combining the choices from Section \ref{sec:precisionchoice} and \ref{multi-dim:sec}, we see that the joint optimization is decoupled. Since the network structure is such that agents share their entire opinion vector with their neighbors, the assumption of a consistent prior on every dimension and a fixed network structure implies agents have no reason to choose a certain dimension to learn about over another. Therefore, the choices of precisions would be identical to that of Theorem \ref{sublinear}. 

\subsubsection{Effort Allocation Across Dimensions}

The more interesting case of this joint optimization is when agents are not constrained to choosing a single dimension to learn about, but rather they can choose to allocate their effort across different dimensions. When investing effort into learning about the underlying state $\theta$, agents may thus choose to diversify their efforts across different dimensions. Rather than putting all their effort about some dimension $d_i$, for example, an agent may choose to put a uniform amount of effort on all dimensions of the state and thus improve the precision of their signals on every dimension. We call such an agent a \textit{generalist}, and call an agent that only allocates effort on one dimension of the state a \textit{specialist}.

As in Section \ref{multi-dim:sec}, every agent's initial opinion vector consists of scalar estimates for each dimension of the state. In Section \ref{multi-dim:sec}, however, the agent only chooses a single dimension to receive a strong signal, whereas they receive a very noisy signal for all other dimensions. We consider the case in each agent has a budget to spend on precision. We focus on the world with only specialists and generalists.\footnote{More intermediate agents can be included, but an agent who is a hybrid of a specialist and generalist can be represented by a convex combination of the two extremes. Our simplification thus improves tractability, as otherwise each agent would be solving an $\mathbb{R}^{m-1}$ dimension problem.}

Suppose the number of agents is linearly increasing in the number of state dimensions: let $n = cm$, where $c \in \mathbb{Z}^{+}$. Half of the agents are generalists and the other half are specialists. Following Section \ref{multi-dim:sec}, specialists will uniformly choose a dimension of the state to learn about, so on average, each state will have $\frac{c}{2}$ specialists. The generalists will split their effort equally on all m dimensions. We consider three population distributions in Figure \ref{fig:spec-vs-gen}.

\begin{figure}[h!]
  \centering
  {\small
  \begin{tabular}{cc|c|c}
    \# of Specialists & \# of Generalists 
      & \textbf{Specialist Signal Variances} 
      & \textbf{Generalist Signal Variances} \\
    \hline
    \noalign{\vskip 2mm} 
    \(0\)   & \(cm\) 
      & N/A 
      & \(\displaystyle \frac{1}{(\underline\tau + \tau_i/m)^2}\quad\forall\,j\) \\[8pt]
    \(\dfrac{cm}{2}\)  & \(\dfrac{cm}{2}\)
      & \(\displaystyle 
        \begin{cases}
          1/\tau_i^2 & \text{on chosen dimension},\\
          1/\underline\tau^2 & \text{on other $m-1$ dimensions}
        \end{cases}
      \) 
      & \(\displaystyle \frac{1}{(\underline\tau + \tau_i/m)^2}\quad\forall\,j\) \\[18pt]
    \(cm\)  & \(0\)  
      & \(\displaystyle 
        \begin{cases}
          1/\tau_i^2 & \text{on chosen dimension},\\
          1/\underline\tau^2 & \text{on other $m-1$ dimensions}
        \end{cases}
      \) 
      & N/A \\
  \end{tabular}
  }
  \caption{Different proportions of specialists and generalists in the network. Specialists concentrate effort on one coordinate (variance \(1/\tau_i^2\) there, \(1/\underline\tau^2\) elsewhere); generalists split effort equally (variance \(1/(\underline\tau+\tau_i/m)^2\) on every coordinate).}
  \label{fig:spec-vs-gen}
\end{figure}

Each agent's allocation budget $\tau_i$ is entirely determined by their influence in the network (by Theorem \ref{sublinear}). Therefore, the difference between specialists and generalists is just how this effective budget of $\tau_i$ is allocated. The proportion of specialists in the network is captured by $\alpha$, and thus the overall population of agents consists of $\alpha n $ specialists and $(1-\alpha)n$ generalists. 

Consider a complete network structure in which each agent has equal influence. The stationary distribution influence for each agent is thus $\frac{1}{n}$. We can then compute the overall consensus variance by summing up variances on each state dimension. The $\alpha n$ specialists each choose dimensions to learn about uniformly at random, and thus the expected number of specialists who learn about a specific dimension $d_i$ is $\frac{\alpha n}{m}$. Each of the $(1-\alpha)n$ generalists will learn a little bit about every dimension. 

\begin{claim}
    Under a complete network structure with equal influence by each agent, there is no interior optimal proportion of specialists. In particular, if 
    $$ \frac{1}{m \tau_i^2} + \frac{(m-1)}{m \underline{\tau}^2} - \frac{1}{\left(\underline{\tau} + \frac{\tau_i}{m}\right)^2} < 0$$ then $\alpha^* = 1$ and a network with only specialists is optimal. If the expression above is negative, $\alpha^* = 0$.
    \label{claim:spec-gen-comparison}
\end{claim}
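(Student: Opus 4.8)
The plan is to write the total consensus error as an explicit function of the specialist fraction $\alpha$, show that this function is affine in $\alpha$, and conclude that its minimizer over $[0,1]$ is a corner, with the sign of the slope given precisely by the displayed inequality.

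First I would set up the objective. Since the network is complete and symmetric and the same network governs every dimension, every agent has stationary influence $\pi_k^j = 1/n$ on every layer $j$. Applying the single-dimension variance decomposition from Equation~\ref{eq1} layer by layer and summing over the $m$ layers, the overall consensus error is $\frac{1}{n^2}\sum_{k=1}^n\sum_{j=1}^m \mathrm{Var}(s_{k,j})$. The key observation is that, although which layer a given specialist picks is random, the inner double sum over all layers is deterministic given the specialist/generalist labels: a specialist contributes $\frac{1}{\tau_k^2}$ on its chosen layer and $\frac{1}{\underline\tau^2}$ on each of the other $m-1$ layers, a per-agent total of $\frac{1}{\tau_k^2}+\frac{m-1}{\underline\tau^2}$ irrespective of the draw, while a generalist contributes $\frac{1}{(\underline\tau+\tau_k/m)^2}$ on each of the $m$ layers, a per-agent total of $\frac{m}{(\underline\tau+\tau_k/m)^2}$. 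Hence taking expectations over the random dimension choices leaves the objective unchanged.

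Next I would invoke Theorem~\ref{sublinear}: because every agent has influence $1/n$, the effort budget $\tau_k$ takes a common value $\tau$ solving $\tau^3 = 2n^{-2}/c'(\tau)$, and this value is independent of $\alpha$ --- the budget is pinned down by network influence, which is $1/n$ for specialists and generalists alike. Writing $A := \frac{1}{\tau^2}+\frac{m-1}{\underline\tau^2}$ and $B := \frac{m}{(\underline\tau+\tau/m)^2}$ for the specialist and generalist per-agent contributions, the objective becomes
\[
V(\alpha)=\frac{1}{n^2}\bigl[\alpha n\,A+(1-\alpha)n\,B\bigr]=\frac{1}{n}\bigl[B+\alpha\,(A-B)\bigr],
\]
which is affine in $\alpha$. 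An affine function on $[0,1]$ attains its minimum at a vertex: if $A-B<0$ the minimizer is $\alpha^*=1$, if $A-B>0$ it is $\alpha^*=0$, and no interior proportion is ever optimal (when $A-B=0$ every $\alpha$ ties). Dividing $A-B<0$ by $m>0$ gives exactly $\frac{1}{m\tau^2}+\frac{m-1}{m\underline\tau^2}-\frac{1}{(\underline\tau+\tau/m)^2}<0$, the stated condition, and the reverse inequality yields $\alpha^*=0$.

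The main obstacle is conceptual rather than computational: one must recognize that the randomness in dimension choice drops out of the aggregate objective and that the per-agent budget is invariant in $\alpha$, after which affineness --- and hence the corner solution --- is immediate. Two modeling conventions should be flagged but cause no trouble: the specialist's own-layer variance is taken as $1/\tau^2$ rather than $1/(\underline\tau+\tau)^2$ (legitimate since $\underline\tau\ll\tau$, and in any case this only shifts the constant $A$, preserving affineness), and $\tau$ is treated as fixed across the population; both are consequences of assumptions already in force, so the argument goes through.
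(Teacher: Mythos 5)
Your proof is correct and follows essentially the same route as the paper's: both express the total consensus variance as an affine function of $\alpha$ (the paper via the expected specialist count $\alpha n/m$ per dimension, you via deterministic per-agent totals summed across layers) and read off the corner solution from the sign of the slope, which is a positive multiple of the displayed expression. Your sign assignment ($A-B<0 \Rightarrow \alpha^*=1$) agrees with the paper's appendix derivation and with the first sentence of the claim; note only that the claim's final sentence contains a typo --- it should read ``positive'' rather than ``negative'' for the $\alpha^*=0$ case.
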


When baseline signals are very noisy, the overall network is better off with generalists than specialists as learning a bit about everything lowers variance more than multiple specialized learners. Very noisy baseline signals means that $\underline{\tau} << \tau_i$, and thus $\alpha^* = 0$. Details are provided in Appendix \ref{examples_spec-vs-gen:app}. The exact condition in Claim \ref{claim:spec-gen-comparison} depends on the network structure: under a complete network structure, the stationary distribution is $\frac{1}{n}$ for all agents, and so the $\pi_k$ terms factor out of the first order condition. For remaining analysis in future sections, we continue with the assumption that all agents are specialists (i.e. only choose one dimension to learn about) but discuss extensions in Section \ref{conclusion}.

\section{Dynamic Network Structures}

We now consider the richer case of our model in which agents' choices endogenously determine the network structure. In particular, all agents are assumed to have the same precision, but the state is multi-dimensional and the network structure is flexible. We draw inspiration from \citet{sadler2021games} who consider endogenous networks formed by agents themselves choosing their set of neighbors. However, we argue that the choice and outcome variable should be reversed: agents choose how to learn and this in turn endogenously specifies the network structure. 

We argue this alternative choice is more in agreement with the canonical work of \citet{simon1955behavioral} and \citet{kahneman2003maps} on bounded rationality. Having the freedom to choose one's neighbors directly can often be overwhelming for an agent (as discussed by \cite{puri2018preference}, \cite{gabaix2024complexity}, and many others), and thus we argue that one's neighbors are implicitly characterized by learning behavior. Furthermore, agents are often not directly able to choose their neighbors in actuality; rather, their actions induce a set of neighbors. 

Therefore, we present a framework under which agents can choose their private learning behavior, and these choices implicitly characterize the underlying network structure on which social learning occurs. For example, two agents who choose to allocate effort learning about the same dimension are more likely to interact and overlap, and thus the corresponding network structure gives a higher weight to their shared edge. We first analyze the case in which a single iteration of social learning occurs: agents choose what to learn about, these choices induce a network structure, and learning occurs with the DeGroot heuristic until convergence. We then allow for social learning to occur in iterations, where an agent's sequence of learning behavior then induces the network structure in subsequent periods.  

\subsection{Single DeGroot Learning Iteration}
Agents' neighbors are determined by what element of the state they chose to learn about. Choices are simultaneously made by all agents, and the resulting choices determine the network structure. If two agents chose to learn about the same element of the state, they are more likely to be connected. 

In particular, $\mathcal{W}$ is dependent on agent's choices of dimensions $d_j$ they learn about. We assume that $ \forall i, j \; \mathcal{W}_{ij} > 0$. For an arbitrary agent $i$, her choice of dimension $d_j$ to learn about directly characterizes how she weights other people's opinions in the network. As in Section \ref{multi-dim:sec}, each agent's estimate is a $m$-dimensional vector, where all dimension estimate are unbiased but very imprecise on every dimension except the one chosen by the agent to learn about. 

Let $d = (d_1, d_2, ... d_n)$ represent the $n$ agents' choices of dimensions. To quantify the mapping from an agent's learning behavior to the induced network structure, we define the symmetric weight kernel $\mathcal{K}: \mathcal{N} \times \mathcal{N} \rightarrow \mathbb{R}$, which quantifies how strongly influenced two agents are by one another. We use a standard Gaussian kernel:
\begin{equation}
    \mathcal{K}(d_i, d_j) = \exp\left(-\alpha(d_i - d_j)^2\right) \label{kernel}
\end{equation}
where $\alpha$ captures the spread of this weight distribution. The network $\mathcal{W}$ is thus defined as the normalized kernel:
\begin{equation}
    \mathcal{W}_{ij} =\frac{\mathcal{K}(d_i,d_j)}{\sum\limits_{k=1}^n \mathcal{K}(d_i,d_k)}
\end{equation}

Our specification of the network kernel is inspired by the phenomenon of echo chambers \citep{nguyen2020echo}. Our kernel-based structure leads agents to place more weight on others whose chosen dimensions are closer (in squared distance) to their own. By imposing that agents place larger weights on neighbors whose learning behavior matches their own, we capture how homophily can drive self-reinforcing information loops. People interact more with others similar to them, and thus the information they use to construct their beliefs is self-enforcing. We discuss other potential network formation structures in Section \ref{conclusion}. Note that the agent puts the highest weight on her own opinion/others who learned about the same dimension as them: $\mathcal{K}(d_i, d_j)$ is maximized when $j = i$. 

Assuming that chosen precision is constant across agents, this formulation implies that the resulting stationary distribution $\pi$ is now a function of agent choices. Each agent's optimization problem is more complex: 
$$\min_{d_i} \; \mathbb{E}\biggl[\Bigl(\hat{\theta}(d_i, d_{-i})-\theta\Bigr)^2\biggr] + c_i(\tau)$$

As in Section \ref{model}, since signals are consistent across all dimensions, the first term is equivalent to the variance of the network consensus. Each agent wants to choose a dimension $d_i$ to reduce overall variance as much as possible. 

\begin{theorem}[\text{Dimension Choices under Multi-Dimensional State}]
    Each agent chooses to learn about the dimension $d_i$ which best distributes squared influence across agents in the network. In other words, they choose a dimension $$d_i \in \arg\min_{d} \sum\limits_{k=1}^n \left(\pi_k(d, d_{-i})\right)^2$$ \label{dimchoice:thm}
\end{theorem}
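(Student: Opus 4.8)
The plan is to show that, once the constant cost term is stripped away, the agent's objective is an affine function of $\sum_{k=1}^n \pi_k(d)^2$ with a positive slope that does not depend on $d$, so that minimizing expected consensus error over $d_i$ is literally the same problem as minimizing the squared-influence sum. First I would observe that because every agent shares the same fixed precision $\tau$, the term $c_i(\tau)$ in the objective $\min_{d_i}\mathbb{E}[(\hat{\theta}(d_i,d_{-i})-\theta)^2]+c_i(\tau)$ is a constant with respect to the choice variable $d_i$ and can be discarded, leaving the pure accuracy problem $\min_{d_i}\mathbb{E}[(\hat{\theta}(d_i,d_{-i})-\theta)^2]$.

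Next I would decompose the mean squared error coordinate by coordinate. Since DeGroot updating runs independently on each dimension and the consensus on dimension $j$ is $\hat{\theta}_j=\sum_{k=1}^n \pi_k(d)\,e_{k,j}$, and since every initial estimate $e_{k,j}$ is unbiased for $\theta_j$ (the consistency assumption built into the model), we get $\mathbb{E}[\hat{\theta}_j]=\theta_j$ and hence $\mathbb{E}[(\hat{\theta}(d)-\theta)^2]=\sum_{j=1}^m \mathrm{Var}(\hat{\theta}_j)$. Using independence of the signal draws across agents, $\mathrm{Var}(\hat{\theta}_j)=\sum_{k=1}^n \pi_k(d)^2\,\mathrm{Var}(e_{k,j})$, where $\mathrm{Var}(e_{k,j})=1/\tau^2$ when $d_k=j$ and $1/\underline{\tau}^2$ otherwise.

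The crux is then a bookkeeping step: when we sum over $j$, the pairs $(j,k)$ with $d_k=j$ are in bijection with the agents, so each agent $k$ contributes its high-precision variance $\pi_k(d)^2/\tau^2$ to exactly one dimension and its low-precision variance $\pi_k(d)^2/\underline{\tau}^2$ to the remaining $m-1$ dimensions. This gives
\[
\sum_{j=1}^m \mathrm{Var}(\hat{\theta}_j)=\Bigl(\tfrac{1}{\tau^2}+\tfrac{m-1}{\underline{\tau}^2}\Bigr)\sum_{k=1}^n \pi_k(d)^2,
\]
and if the noisy-signal contributions are suppressed (as in the display just preceding the theorem) the prefactor is simply $1/\tau^2$. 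In either case the prefactor is a fixed positive constant independent of $d$, so minimizing the objective over $d_i$ with $d_{-i}$ held fixed is equivalent to minimizing $\sum_{k=1}^n \pi_k(d_i,d_{-i})^2$, which is exactly the claimed best response.

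I expect the only genuinely delicate point to be the variance decomposition: one must justify that the cross-covariance terms vanish (independence of signals across agents) and that $\pi(d)$ is a well-defined stationary vector for every profile $d$ — the latter holds because the kernel weights in \eqref{kernel} are strictly positive, so $W(d)$ is irreducible and aperiodic for all $d$. I would also flag two caveats that are not really obstacles: the statement characterizes a best response with $d_{-i}$ fixed rather than a joint social optimum, and since the dimension set $\{1,\dots,m\}$ is finite the arg-min is nonempty but need not be a singleton.
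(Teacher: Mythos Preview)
Your proposal is correct and follows essentially the same route as the paper: reduce the objective to a positive constant times $\sum_k \pi_k(d)^2$ plus the $d_i$-independent cost $c_i(\tau)$, then conclude. If anything you are more careful than the paper's own sketch, since you explicitly track the low-precision $\underline{\tau}$ contributions and show they only change the prefactor to $\tfrac{1}{\tau^2}+\tfrac{m-1}{\underline{\tau}^2}$ (the paper's display suppresses these terms), and you justify existence of $\pi(d)$ via strict positivity of the kernel weights.
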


\begin{proof}[Proof:]

The overall variance of the consensus estimate is: $$\sum\limits_{j=1}^m \mathrm{Var(}\hat{\theta}_j) = \sum\limits_{j=1}^m \mathrm{Var}\left(\sum\limits_{i=1}^n \mathbbm{1} \{d_i = j\} \; \pi(d) s_{i, d_i}\right) = \frac{1}{\tau^2}\sum\limits_{j=1}^m \sum\limits_{i: d_i = j} \left(\pi(d_i, d_{-i})\right)^2$$ However, note that since each agent only learns about one dimension in each period, the summation is just the squared sum of each agent's stationary influence. Thus, each agent chooses $d_i$ to minimize:
$$\frac{1}{\tau^2}\sum\limits_{k=1}^n \left(\pi_k(d_i, d_{-i})\right)^2 + c_i(\tau)$$
The cost term is a constant under the optimization problem, and thus the theorem follows.
\end{proof}

Our result implies that choosing the dimension such that she has the least influence is not necessarily optimal for an agent. Rather, the agent wants to choose a dimension so that her influence is closest to $\frac{1}{n}$. Intuitively, the theorem claims that an agent should choose their dimension in a way that best distributes agents across state dimensions. The effect of a choice $d_i$ will affect the overall objective in two counteracting ways: $i)$ her own stationary influence, $ii)$ other agents' stationary influence as a result of agent $i$'s choice. 

This result strongly resembles the classic wisdom of crowds phenomenon (\citealp{golub2010naive}), which shows that with a social planner, evenly balanced influence is best for social learning. However, the key distinction in Theorem \ref{dimchoice:thm} is that individual agents independently choose their sampling dimension and achieve the same result. The result of balanced influence above is based on individual choice; since the objective function is dependent only on overall squared consensus error. 

For illustration, suppose there is a scenario in which she can choose $d_i = 1 \rightarrow \pi_1(d) = \frac{1}{n^2}$ or choose $d_i = 2 \rightarrow \pi_1(d) = \frac{1}{n}$. A choice of $d_i = 1$ minimizes her individual influence in the network, but that \lq\lq saved\rq\rq  \hspace{0.1cm} influence has to effectively be allocated to other agents, leading to higher overall network variance compared to the choice $d_i = 2$. 

The parameter $\alpha$ captures the spread of the weight kernel. If $\alpha \rightarrow 0$, then $\mathcal{K}(d_i, d_j) \rightarrow \frac{1}{n}$ which essentially collapsed to the case in Section \ref{model} where agents choose dimensions uniformly. The choice of dimension has no effect on the network structure, and so every dimension is effectively a best response. As $\alpha \rightarrow \infty$, $K(d_i, d_j) \scriptsize\rightarrow \begin{cases}
    \frac{1}{N_{d_i}} \;\text{ if $d_i = d_j$} \\
    0 \;\;\;\;\;\text{ otherwise}
\end{cases}$\normalsize
and so agent $i$ only assigns positive weight to other agents who choose the same dimension $d_i$ as them.

Rather than learning in a simultaneous fashion, we can also consider the case in which choices are made sequentially. Each agent observes the dimension choices of the agents who chose before them, and then they themselves choose a dimension to learn about. 

This specification resembles existing literature on sequential learning, but the main distinction is that the network structure is not determined until agents make their choices. In particular, an agent chooses a dimension such that the resulting endogenously formed network structure under their belief is best to learn about the underlying state $\theta$. We discuss this in Section \ref{conclusion} as a potential direction for future work. 

\subsection{Iterative DeGroot Learning}

We now consider the case in which this DeGroot updating process happens iteratively. In particular, in the first period, all agents choose a dimension of the state to learn about. These choices endogenously determine the network structure, and thus the corresponding DeGroot updating weights. DeGroot updating is simulated for a fixed number of periods (until the network arrives at or close to consensus), and then agents get a new opportunity to acquire information. However, the new network structure (DeGroot weights) are endogenously determined not only by the choices of what to learn about in the second time period, but also the first time period. We repeat this process and capture agents' learning behavior as sequences of chosen dimensions. 

Under this iterative framework, we can view learning behavior sequences as an agent's memory. \cite{malmendier2021memory}, \cite{fudenberg2024selective}, and many others present models for individual decision making and equilibria with an emphasis on memory. We draw inspiration upon these models, proposing that memory not only impacts how agents learn in future periods, but it also specifies the network under which learning occurs. 

We extend the kernel in Equation \ref{kernel} to compare not just choices of dimension, but rather players' histories of dimension choices, which we refer to as each player's \textit{internal memory}. In particular, consider two arbitrary players $i, i'$ with histories of dimension choices $\mathcal{M}_i^t = (d_i^1, d_i^2, ..., d_i^t)$ and $\mathcal{M}_{i'} = (d_{i'}^1, d_{i'}^2, ..., d_{i'}^t)$. $\mathcal{M}_i^t$ is agent $i$'s memory at time $t$ of all her past dimension choices. In Equation \ref{kernel}, choices are just scalars and so the squared distance metric is natural. However, with vector comparisons between two agents' memories, we propose the following distance metric:
\begin{equation}
    \mathcal{D}(\mathcal{M}_i^t, \mathcal{M}_j^t) = \sum\limits_{\tau=1}^t \gamma^{t-\tau} \left(d_i^\tau - d_j^\tau \right)^2 \label{eq:distance-memories}
\end{equation}
which is effectively an exponentially weighted $\ell_2$ norm. Rather than using a sum of squared dimension difference in each time period, this distance metric weights similarity in the recent past higher. In other words, treating all else as fixed, two agents choosing the same dimensions in the last period share a higher weight similarity than two agents with matching dimensions in the first period of learning. 

Using Equation \ref{eq:distance-memories}, we can define the vector RBF-kernel analog of Equation \ref{kernel} as $K'$: 
\begin{equation}
    \mathcal{K}'(\mathcal{M}_i^t, \mathcal{M}_j^t) = \exp\left(-\alpha \cdot \mathcal{D}(\mathcal{M}_i^t, \mathcal{M}_j^t)\right) \label{eq:kernelvec}
\end{equation}
and the corresponding weight matrix is determined the from $K'$ as above: 
$$  \mathcal{W}_{ij}^t =\frac{\mathcal{K}'(\mathcal{M}_i^t,\mathcal{M}_j^t)}{\sum\limits_{k=1}^n \mathcal{K}'(\mathcal{M}_i^t,\mathcal{M}_k^t)}$$

We assume that at time $t+1$, all agents can see the endogenous network structure at time $t$. So, using Theorem \ref{dimchoice:thm}, their choice of dimension in period $t+1$ is exactly the dimension $d_i$ that best distributes influence conditional on the information seen from the previous period. 

Even though the actual network updates and learning process follows DeGroot updating, the choices of dimensions to learn about in each period follows from Bayesian updating. Each agent had chosen a dimension in period $t$ based on information up until period $t-1$. They then all see the network at time $t$, and update their belief accordingly using Bayes rule. 

\begin{corollary}[Bayesian Updating Under Iterative DeGroot Updating]
    \label{corollary:iterative-multidim-DG}
    Consider a decision period $t+1$ where all agents observe the network structure from period $t$. Assume agents' beliefs follow a martingale process on past expectations: they have beliefs $\mu^t$ on the dimension choices of other agents in period $t+1$, where:
    $$\mathrm{E}_{\mu}[d^{t+1} \mid \mathcal{I}_t] = \mathrm{E}_\mu[d^t] $$ where $\mathcal{I}_t$ includes all information the agent has after observing the network at time $t$. 
    Then, using Theorem \ref{dimchoice:thm}, each agent chooses the dimension that minimizes the squared sum of stationary influences conditional on their beliefs of other agents' future dimension choices:
    $$d_i^{t+1} \in \arg \min _d \sum\limits_{k=1}^n \left(\pi_k(d, \mathrm{E}_\mu[d_{-i}^{t+1}])\right)^2$$
\end{corollary}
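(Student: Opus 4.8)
The plan is to treat agent $i$'s period-$(t+1)$ decision as a direct instance of Theorem~\ref{dimchoice:thm} in which the payoff-relevant uncertainty is over the other agents' contemporaneous choices $d_{-i}^{t+1}$, and to show that the martingale structure of beliefs lets the agent replace that uncertainty by its conditional mean. First I would write down the period-$(t+1)$ problem: having observed the network $W^t$ (equivalently $\mathcal{I}_t$), agent $i$ minimizes expected consensus variance, which by the same computation preceding Theorem~\ref{dimchoice:thm} equals $\frac{1}{\tau^2}\,\mathrm{E}_\mu\!\left[\sum_{k=1}^n \pi_k(d_i,d_{-i}^{t+1})^2 \,\middle|\, \mathcal{I}_t\right]$ up to the fixed cost $c_i(\tau)$. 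The weights, and hence the stationary vector $\pi$, now depend on the memory vectors $M^{t+1}$ through the kernel $K'$ of Equation~\ref{eq:kernelvec}; but conditional on $\mathcal{I}_t$ the only unknown coordinate of each $M_j^{t+1}$ is the newly chosen dimension $d_j^{t+1}$, so the problem is exactly that of Theorem~\ref{dimchoice:thm} with a belief-valued profile $d_{-i}^{t+1}$.

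Second, I would invoke the martingale hypothesis $\mathrm{E}_\mu[d^{t+1}\mid\mathcal{I}_t]=\mathrm{E}_\mu[d^t]$ to identify the agent's forecast of $d_{-i}^{t+1}$ with the already-known expectation $\mathrm{E}_\mu[d_{-i}^t]$, and then argue that the agent evaluates the Theorem~\ref{dimchoice:thm} objective at this forecasted profile. Plugging $\mathrm{E}_\mu[d_{-i}^{t+1}]$ into $\sum_{k=1}^n \pi_k(d,\cdot)^2$ and dropping the constant cost term yields precisely $d_i^{t+1}\in\arg\min_d \sum_{k=1}^n \pi_k(d,\mathrm{E}_\mu[d_{-i}^{t+1}])^2$, the claimed characterization. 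Because Theorem~\ref{dimchoice:thm} holds for any number of agents and any realized profile of the other agents' dimensions, this application is immediate once the forecasted profile is fixed; the remainder is bookkeeping relating $\mathcal{I}_t$, the observed $W^t$, and the memory vectors.

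The main obstacle is the certainty-equivalence step: $\pi_k(\cdot)$ is a nonlinear function of the memory vectors (through the RBF kernel $K'$ and the row normalization), so in general $\mathrm{E}_\mu[\pi_k(d,d_{-i}^{t+1})^2\mid\mathcal{I}_t]\neq \pi_k(d,\mathrm{E}_\mu[d_{-i}^{t+1}])^2$. I would resolve this in one of two ways. The cleanest, and the one matching the corollary's phrasing that beliefs are summarized by their expectations, is to adopt the convention that agents act on point (best-guess) forecasts of $d_{-i}^{t+1}$ rather than on full predictive distributions; then the objective is by construction evaluated at the forecasted profile, and the martingale property is the substantive input that identifies that forecast with $\mathrm{E}_\mu[d_{-i}^t]$. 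Alternatively, for small $\alpha$ (weakly dimension-dependent weights) a first-order expansion of $\pi_k$ around $\mathrm{E}_\mu[d_{-i}^{t+1}]$ makes the gap second order in the forecast error and hence negligible, so the plug-in rule is approximately optimal. I would state the point-forecast convention explicitly at the start of the proof so that the corollary follows from Theorem~\ref{dimchoice:thm} without further work.
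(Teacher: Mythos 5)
Your proposal follows essentially the same route as the paper's derivation in Appendix~\ref{deriv:iterativeDG}: condition on the observed network $W^t$, use Bayes' rule over memory profiles consistent with $W^t$, invoke the martingale property to identify the forecast of $d_{-i}^{t+1}$ with the posterior expectation of $d_{-i}^{t}$, and then apply Theorem~\ref{dimchoice:thm} at that forecasted profile. The one substantive point where you go beyond the paper is your explicit flagging of the certainty-equivalence step: since $\pi_k$ depends nonlinearly on the memory vectors through the RBF kernel and row normalization, $\mathrm{E}_\mu[\pi_k(d,d_{-i}^{t+1})^2\mid\mathcal{I}_t]$ need not equal $\pi_k(d,\mathrm{E}_\mu[d_{-i}^{t+1}])^2$, and plugging a (possibly non-integer) expected dimension into the kernel is a modeling convention rather than a consequence of expected-variance minimization. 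The paper performs exactly this plug-in without comment, so your proposal is, if anything, more careful; your suggestion to state the point-forecast convention explicitly (or to justify it via a small-$\alpha$ expansion) is a genuine improvement on the paper's argument rather than a deviation from it. Note also that in the paper's two-period example the posterior over memories is degenerate (the weight matrix reveals $d^1$ exactly), so the issue you raise only bites for longer histories where $W^t$ does not uniquely identify the memory profile.
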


Details are provided in \ref{deriv:iterativeDG}. When making a decision in period $t+1$, each agent first updates their posterior over possible memory profiles. They then form an expectation of the most recent dimension choice chosen by each agent in the past period and choose their subsequent dimension accordingly. 

\section{Summary and Conclusions}
\label{conclusion}

This paper presents a social learning model under the DeGroot learning rule where agents choose the precision levels of their signals and the dimension of the state they learn about. Both choices shape the underlying network structure on which social learning occurs. We first present a tractable model with a fixed network structure and single dimension where the optimal precision choice $\tau$ is sublinear in the agent's stationary influence: $\pi$: $\tau^3_i \propto \pi_i^2$. We show how this result specifies precision choices under common network structures: complete networks, core-periphery, ring, and star networks, and explicitly compare the individually versus socially optimal choices. 

Our second main contribution is allowing the network structure to be flexible and exclusively dependent on what agents choose to learn about. We propose an RBF kernel-based distance metric between agents, which then translates to a corresponding weight matrix and network structure on which learning takes place. We show that an agent's optimal dimension choice is not one which maximizes their influence in the network but rather one that best distributes influence across agents. This theorem characterizes optimal behavior when information acquisition occurs in a single period: we then consider the natural analog where the information gathering and social learning process occurs iteratively. Distances between agents are then defined as a vector analog of the single-iteration case, and the dimension choices follow directly from Bayesian updating on other agents' future choices.  

We discuss a few interesting future directions and extensions of the paper. In ongoing research, we hope to extend the model along various directions. Our current model specifies an endogenous network structure where connections are more likely to be made between similar learning agents. However, having connections between opposite learning agents may be more beneficial to the overall network, leading to quicker convergence as information flows faster. Agents who learn about completely opposite dimensions and then interact with one another will extract the maximum amount of information from their two private signals, whereas two agents who learn about similar dimensions and then interact may spark information confounding \citep{dasaratha2019aggregative}. Agents could have the opportunity to pay to alter the network structure in their favor. In particular, an agent may pay some fixed cost to connect to an agent with a completely different learning trajectory. This explicit addition of diversity in opinion can potentially improve the speed of learning. 

Another interesting direction would be considering a sequential analog of our model. In this paper, we focus on the case in which agents all simultaneously choose a dimension, and then the network structure is endogenously formed from those choices. One could consider a case in which agents report their dimension choices one by one, and thus agents have incentives not only to best respond to past agents but also to shape the network by influencing future agent decisions. Agents may want to help the overall network by learning about an element about which fewer people have learned about, which may be suboptimal in the short run but better in the long run.

Our model can also be extended to augment the growing literature on generative AI by considering a layer of AI agents in the network. By making the status of nodes uncertain, agents are unsure of whether their neighbors are other human agents or AI agents. Agents would trust neighbors differently depending on their inherent type, and this additional layer of uncertainty would affect agent decisions and precision choices. For example, if agents perceive AI to be very precise, knowing that an AI agent has a stationary influence in the network would imply a lower choice of precision as the existing level of variance would be scaled down. This updated model structure could be used to capture the growing reliance on generative AI in human decisions. 

\newpage
\bibliography{bibliography.bib}

\appendix

\clearpage
\section{Derivations and Proofs}
\label{deriv:app}

\subsection{Theorem \ref{sublinear}}
\label{thm1_DERIV}

Our initial optimization problem is as follows: 
$$\min_{\tau \ge0} \; \mathbb{E}[(\hat{\theta}-\theta)^2] + c_i(\tau_i)$$

We can decompose the first mean squared error term in the objective function above. Since each signal distribution is consistent ($E[s_i] = \theta \; \forall i$), the overall consensus $\hat{\theta}$ is also unbiased. $\hat{\theta}$ is just a convex combination of the different agent's signals, and if each of them is consistent, the convex combination is as well. Therefore, we have that:
$$\mathbb{E}[(\hat{\theta}-\theta)^2] = \mathbb{E}[\hat{\theta}-\theta]^2 + \mathrm{Var}[\hat{\theta}-\theta]  = \mathrm{Var}[\hat{\theta}]$$

Furthermore, since signals are independent, the variance can be expressed as the sum of the weighted variances of each signal: 
$$\mathrm{Var}[\hat{\theta}] = \mathrm{Var}\left(\sum_{k=1}^n \pi_k s_k\right) = \sum_{k=1}^n \mathrm{Var}(\pi_k s_k) = \sum_{k=1}^n \pi_k^2 \cdot  \mathrm{Var}(s_k) $$
where the second step follows from the independence of signals. Under normal signals, we can further simplify $\mathrm{Var}[\hat{\theta}] = \sum_{k=1}^n \frac{\pi_k^2}{\tau_k^2}$. 

Plugging this back into our loss function gives us the following optimization function for each agent $i$:
\begin{equation*}
    \min_{\tau_i \ge0} \; \sum_{k=1}^n \pi_k^2 \cdot  \mathrm{Var}(s_k)  + c_i(\tau_i) \label{eq1}
\end{equation*}

Under regularity conditions on the network, adding more people to the network decreases consensus variance: there are additional terms in the first summation but since $\sum_k \pi_k = 1$, adding more people would lead to lower absolute influence by each agent, and thus smaller contributing variance terms by every agent. However, this does not always hold: consider an $n$ person ring network topology, where each agent's influence is effectively $\frac{1}{n}$. We could then add a central agent who is connected to all other agents; thus this new agent's influence is much larger than $\frac{1}{n}$ and the overall sum will increase. 

Theorem \ref{sublinear} then follows trivially from taking first order conditions of above with respect to $\tau_i$:
$$\pi_i^2 \cdot \frac{\partial}{\partial \tau_i}\mathrm{Var}(s_i) = c_i'(\tau_i)$$
Under normal signals, the expression simplifies to an analytic expression for 
$$\frac{2 \pi_i^2}{\tau_i^3} = c_i'(\tau_i) \Rightarrow \tau_i^3 = \frac{2 \pi_i^2}{c_i'(\tau_i)}$$ as desired.
Second-order derivatives are always positive under normal signals, and they are also positive in the general setting as long as $\frac{\partial^2}{\partial \tau_i^2}\mathrm{Var}(s_i)$ is not sufficiently negative. 

\subsection{Corollary \ref{corollary:socialplanner}}
\label{pf:socialplanner}

The social planner solves the joint optimization problem presented in Equation \ref{socialplanner}. The $n$ first order conditions are decoupled, each of which yield:
    $$\frac{\partial}{\partial \tau_i}\left(n \cdot \sum_{k=1}^n \frac{\pi_k^2}{\tau_k^2} + \sum_i c_i(\tau_i)  \right) = \frac{\partial}{\partial \tau_i} \left(n \cdot \frac{\pi_i^2}{\tau_i^2} + c_i(\tau_i)\right)$$

    $$\Rightarrow\frac{2 n\pi_i^2}{\tau_i^3} = c_i'(\tau_i) \Rightarrow \tau_i^{social} = \left(\frac{2 n\pi_i^2}{c_i'(\tau_i^{social})}\right)^{1/3}$$
    and the initial result follows. 

To show the bounds on the ratio of the individually and socially optimal precision choices, divide the social planner FOCs and individual FOCs to get: 
$$\frac{\tau_i^{social}}{\tau_i} = \biggl(\frac{n c'(\tau_i)}{c'(\tau_i^{social})}\biggr)^{1/3}$$
Let the ratio of the socially optimal to individually optimal precision choice for agent $i$ be denoted as $r_i$. 
$$\Rightarrow r_i = \biggl(\frac{n c'(\tau_i)}{c'(r_i \tau_i)}\biggr)^{1/3}$$

Assume for contradiction that $r \leq 1$. The LHS is less than 1, and since costs are convex, the ratio of the cost function derivatives must be $\geq 1$, and since $n \geq 2$, the RHS is greater than 1. $\perp$

Since costs are convex, the ratio of derivatives above must be $\leq 1$. Take the boundary case of when the ratio is equal to 1. This exactly yields linear costs, and simplifying gives that the ratio $r_i = n^{1/3}$, giving the upper bound. 

If we assume that the cost function is of the form $c_i(\tau_i) = \tau_i^\alpha$ for $\alpha \geq 1$ (to ensure convexity), then the ratio can be explicitly solved in terms of $\alpha, n$. 

$$\frac{c'(\tau_i)}{c'(r_i \tau_i)} = \frac{1}{r_i^{\alpha-1}}\Rightarrow r_i = \biggl(\frac{n}{r_i^{\alpha-1}}\biggr)^{1/3} \Rightarrow r_i = n^{\frac{1}{2 + \alpha}}$$

Under linear costs, we get the desired $n^{1/3}$ bound, but the steeper the cost function, the smaller inefficiency becomes. 

\subsection{Complete Network Structure with n Agents}
\label{proof:claimopt_prec}

To derive optimal precisions, we can use the general weight matrix and construct a corresponding system of $n$ linear equations. We have the weight matrix W as: 
$$
      W \;=\;
      \begin{pmatrix}
        x_1               & \tfrac{1 - x_1}{\,n-1\,} & \cdots & \tfrac{1 - x_1}{\,n-1\,} \\
        \tfrac{1 - x_2}{\,n-1\,} & x_2               & \cdots & \tfrac{1 - x_2}{\,n-1\,} \\
        \vdots            & \vdots            & \ddots & \vdots \\
        \tfrac{1 - x_n}{\,n-1\,} & \tfrac{1 - x_n}{\,n-1\,} & \cdots & x_n
      \end{pmatrix}
$$
and so the general expression for the $j$'th element of the stationary distribution $\pi_j$ can be written as: 
$$\pi_j = \pi_j x_j + \sum_{i \neq j} \pi_i \frac{1-x_i}{n-1}$$
Simplifying and reorganizing the expression, we get: 
\begin{align*}
    \pi_j (1-x_j) &= \frac{1}{n-1} \sum_{i \neq j} \pi_i(1- x_i) \\
    & = \frac{1}{n-1} \left(\sum\limits_{i=1}^n \pi_i(1-x_i) - \pi_j(1-x_j)\right)
\end{align*}
We then let $S = \sum\limits_{i=1}^n \pi_i(1-x_i)$, which yields: 
\begin{align*}
    & \pi_j (1-x_j) = \frac{1}{n-1}\left(S - \pi_j(1-x_j)\right) \\
    & \Rightarrow (n-1)\left[ \pi_j (1-x_j)\right] = S - \pi_j(1-x_j) \\ 
    & \Rightarrow S = n \pi_j (1-x_j) \\
    & \Rightarrow \pi_j = \frac{S}{n(1-x_j)}
\end{align*}

We use the fact that $\pi$ must be a stationary distribution and thus all elements must sum up to 1: $\sum\limits_k \pi_k = 1$. This yields: 
\begin{align*}
    &\sum\limits_{k=1}^n \frac{S}{n(1-x_k)} = 1 \\
    & \Rightarrow S = \frac{n}{\sum\limits_{k=1}^n \frac{1}{1-x_k}}
\end{align*}
Plugging this back into our expression for $\pi_j$, we have: 
$$\pi_j = \frac{\displaystyle \frac{n}{\sum_{k=1}^n \frac{1}{1 - x_k}}}{n\,(1 - x_j)} = \frac{\frac{1}{1-x_j}}{\sum\limits_{k=1}^n \frac{1}{1-x_k}}$$

The optimal precision $\tau_j$ is then directly characterized by Theorem \ref{sublinear}, where we plug in the expression above for $\pi_i$ into $\left(\frac{2\pi_i^2}{c_i'(\tau_i)}\right)^\frac{1}{3}$. 

Finally, to show that the stationary distribution is indeed strictly increasing in $x_i$, we can show that $\frac{\partial \pi_i}{\partial x_i} > 0$. Let $D = \sum\limits_{k=1}^n \frac{1}{1-x_k}$. Then: 
\begin{align*}
    \frac{\partial \pi_i}{\partial x_i} &= \frac{\frac{1}{(1-x_i)^2} \cdot D - \left(\frac{1}{1-x_i}\right)\left(\frac{1}{(1-x_i)^2}\right)}{D^2} \\
    & = \frac{D - \frac{1}{1-x_i}}{D^2(1-x_i)^2} > 0
\end{align*}
The denominator is clearly greater than 0 as both terms are squared, and the numerator is also greater than 0 due to our definition of D and the fact that $x_i \in (0, 1)$. 

\subsubsection{Numerical Examples}

The stationary distribution is $\pi = \left(\frac{1}{4},\frac{1}{4}, \frac{1}{4}, \frac{1}{4} \right)$. Using Theorem \ref{sublinear}, each agent will choose a precision level of: 
$$\tau_i^3 = \frac{2 \pi_i^2}{c_i'(\tau_i)} = \frac{1}{8\kappa} = \frac{1}{32} \Rightarrow \tau_i = 32^{-1/3} \approx 0.315$$

This will lead to each agent receiving a signal from $\mathcal{N}\left(\theta, 32^{\frac{2}{3}} \right)$.

We can then compare this to the social planner case, where agents no longer individually choose their precision levels and the social planner chooses all $\tau_i$'s. Following Corollary \ref{corollary:socialplanner}, each agent's precision choice will just be the individually optimal choice scaled by a factor of $n^\frac{1}{3} = 4^\frac{1}{3} \approx 1.6 \Rightarrow \tau_i^{social}  \approx 0.504 $. By putting this higher level of effort, the network consensus variance is reduced, whereas individual costs do end up increasing. 

We can see this by evaluating the objective at both precision choices: under the individually optimal precision choice, the objective function evaluates to: 
$$\frac{1}{4 \tau_i^2} + 4 \tau_i \approx 3.78$$
whereas under the socially optimal precision choice, the objective evaluates to:
$$\frac{1}{4 \tau_i^2} \cdot \frac{1}{1.6^2} + 6.4 \tau_i \approx 3$$

which means that the socially optimal case is better for all agents, but each agent still has an incentive to deviate and lower their precision choice: in this socially optimal case, where all other agents $i' \neq i$ follow the social planner, agent $i$ has an incentive to deviate to the individually optimal $\tau_i$:
$$\frac{3}{16\tau_i^2} \cdot \frac{1}{1.6^2} + \frac{1}{16\tau_i^2} + 4 \tau_i\approx 2.628 < 3$$

A more interesting case is a complete network but agents don't share equal/symmetric weights. In particular, consider the case in which higher indexed agents weight themselves more, and thus have a stronger influence in the network: this is presented below in Figure \ref{fig:uneven-fully-connected-square}. 

\begin{figure}[h!]
  \centering
  \begin{minipage}{0.4\textwidth}
    \[
      W = 
      \begin{pmatrix}
        0.1 & 0.3 & 0.3 & 0.3\\
        0.25 & 0.25 & 0.25 & 0.25\\
        0.2 & 0.2 & 0.4 & 0.2\\
        0.1 & 0.1 & 0.1 & 0.7
      \end{pmatrix}
    \]
  \end{minipage}
  \quad
  \begin{minipage}{0.5\textwidth}
    \centering
    \begin{tikzpicture}[
        node/.style={circle, draw, minimum size=8mm, fill=#1!30},
        interedge/.style={<->, thick},
        loopedge/.style={->, thick},
        lab/.style={fill=white, inner sep=1pt},
      ]
      % Nodes in square layout
      \node[node=blue] (1) at (0,2)   {1};
      \node[node=blue] (2) at (2,2)   {2};
      \node[node=blue] (3) at (0,0)   {3};
      \node[node=blue] (4) at (2,0)   {4};

      \foreach \u/\v in {1/2,1/3,1/4,2/3,2/4,3/4} {
        \draw[interedge] (\u) -- (\v);
      }

      \draw[loopedge] (1) to[loop above] node[lab] {0.1} (1);
      \draw[loopedge] (2) to[loop above] node[lab] {0.25} (2);
      \draw[loopedge] (3) to[loop below] node[lab] {0.4} (3);
      \draw[loopedge] (4) to[loop below] node[lab] {0.7} (4);
    \end{tikzpicture}
  \end{minipage}
  \caption{A fully connected network of 4 agents with varied self‐loop weights and the corresponding weight matrix.}
  \label{fig:uneven-fully-connected-square}
\end{figure}

Under this network specification, the stationary distribution is $\pi = (0.149, 0.179, 0.224, 0.448)$, where the agents with higher weights on their own opinion have a stronger influence in the network (less affected by their neighbors). This implies that agent 4 will put in the most effort and thus choose the highest precision out of all agents. 

\subsection{Core-Periphery and Ring Network Structures}
\label{proof:core-perip-claim}

We focus on the core-periphery network derivation; the ring network details are analogous. We follow a similar procedure to Appendix \ref{proof:claimopt_prec}, first constructing the general stationary distribution system of equations. For the periphery agents, the general equation takes the form: 
$$\pi_j = \frac14 \pi_{j-1} + \frac14 \pi_j + \frac14\pi_{j+1} + \frac1n \pi_n$$
and for the core agent:
$$\pi_n = \sum\limits_{i=1}^{n-1} \frac14 \pi_i + \frac1n \pi_n$$

Using the fact that the stationary distribution sums up to 1 and all periphery agents are symmetric, we can solve above to get 
$$\pi_p  = \frac{4}{5n-4} \text{ and } \pi_n = \frac{n}{5n-4}$$
where $\pi_p, \pi_n$ denote stationary influence of a periphery agent and the core agent respectively. 

Logically, as the number of agents $n$ grows, both the core and periphery agents lose influence in the network. However, periphery firms lose more influence from adding an additional agent as compared to the core agent: $0 > \dfrac{\partial \pi_n}{\partial n}  = \dfrac{-4}{(5n-4)^2}> \dfrac{\partial \pi_p}{\partial n} = \dfrac{-20}{(5n-4)^2}$. 

\subsection{Corollary \ref{corollary:iterative-multidim-DG}}
\label{deriv:iterativeDG}

When moving from period $t$ to $t+1$, each agent $i$ observes the full weight matrix $W_t$ but not others’ raw dimension choices $d_j^\tau \;\;\forall \tau  \in \{1, ..., t\}$. Each agent constructs a set consistent memory vectors $\mathcal{M}(W_t)$. 
$$\mathcal{M}'(W_t) = \{ M^t = (M_1^t, M_2^t, ... M_n^t) : M^t \text{ is consistent with observed } W_t\}$$
where consistency means that applying the kernel $K'$ to agent $i$'s own known memory and all other agents' memories matches with the observed $W_{ij}$:
$$\mu_i(M^t \mid W^t) \propto \mu_i(M^t) \times \mathbbm{1}\Bigl\{W_{ij} = \tfrac{K'(M_i^t,M_j^t)}{\sum\limits_{k=1}^n K'(M_i^t,M_k^t)} \forall j\Bigr\}$$

Thus, $\mu(M^t \mid  W^t) = 0 $ if $M^t \notin \mathcal{M}'(W_t)$, $\mu(M^t \mid  W^t) \propto \mu(M^t) $ if $M^t \in \mathcal{M}'(W_t)$. 

Given this set of consistent memory vectors $\mathcal{M}'(W^t)$, the agent then forms an expectation on the dimension each other agent will choose in the next period. In particular, this expectation is just the weighted average of dimensions chosen in the past period. 
$$\mathrm{E}\bigl[d_j^{\,t+1}\mid W^t\bigr] = \sum\limits_{k=1}^m k \cdot  \Pr\bigl(d_j^t = k \mid W^t\bigr) = \sum\limits_{k=1}^m k \left(\sum\limits_{M^t \in \mathcal{M}(W^t)} \mathbbm{1} \bigl\{ d_j^t(M^t) = k  \cdot \mu(M^t \mid W^t)\bigr\} \right)$$

Then, using a direct application of Theorem \ref{dimchoice:thm} where the choices of other agents $d_{-i}$ is now constructed using this expectation, agent $i$ picks 

$$d_i^{\,t+1}\in \arg\min_{d\in\{1,\dots,m\}} \sum\limits_{k=1}^n \bigl[\pi_k\bigl(d,\; \mathrm{E}_\mu\bigl[d_{-i}^{t+1}\bigr]\bigr)\bigr]^2$$

\subsection{Welfare Analysis and Comparative Statics}
\label{pf:welfareanalysis}

We start with the individual welfare gap $\delta_i$ defined the difference between the welfare contribution of agent $i$ under the individual equilibrium and social equilibrium as follows: 

$$\delta_i = n \frac{\pi_i^2}{\tau_i^2} + c_i(\tau_i) - \Biggl(n \frac{\pi_i^2}{(\tau_i^{social})^2} + c_i(\tau_i^{social}) \Biggr) $$

Note that the socially optimal choice by definition maximizes social welfare. However, since our analysis is conducted with each agent's loss function $\ell_i(\cdot) = -u_i(\cdot)$, we flip the difference and show this gap is $\geq 0$. 

As in the proof for Corollary \ref{corollary:socialplanner}, define $r_i = \frac{\tau_i^{social}}{\tau_i}$. Rearranging the first-order conditions and using the ratio of the FOCs across the individual and social equilibria gives us: 

$$\delta_i = \frac{n\tau_ic_i'(\tau_i)}{2} + c_i(\tau_i) -\Bigg( \frac{\tau_i^{social}c_i'(\tau_i^{social})}{2}  - c_i(\tau_i^{social})\Biggr)$$
$$\Rightarrow \delta_i = \frac{n\tau_ic_i'(\tau_i)}{2} \Biggl[ 1 - \frac{1}{r_i^2}\Biggr] + c_i(\tau_i) - c_i(\tau_i^{social})$$

We can now use the convexity of our cost function to upper bound the cost difference term. In particular, the general convexity definition gives $c_i(\tau_i^{social}) - c_i(\tau_i) \geq c_i(\tau_i) (\tau_i^{social} - \tau_i)$, but simply rearranging terms gives us the more convenient lower bound: $c_i(\tau_i) - c_i(\tau_i^{social}) \leq c_i'(\tau_i) (\tau_i - \tau_i^{social})$, and this gives: 
$$ \leq \frac{n\tau_ic_i'(\tau_i)}{2} \Biggl[ 1 - \frac{1}{r_i^2}\Biggr] + c_i'(\tau_i) (\tau_i - \tau_i^{social})$$
$$ = \frac{n\tau_ic_i'(\tau_i)}{2} \Biggl[ 1 - \frac{1}{r_i^2}\Biggr] +  \tau_ic_i'(\tau_i)(1-r_i) = \tau_i c_i'(\tau_i) \Biggl[\frac{n}{2}\biggl(1-\frac{1}{r_i^2}\biggr) + 1 - r_i \Biggr]$$

We can also lower bound the difference above using an analogous application of convexity and show both bounds are of order n:
$$\geq \frac{n\tau_ic_i'(\tau_i)}{2} \Biggl[ 1 - \frac{1}{r_i^2}\Biggr] + c_i'(\tau_i^{social}) (\tau_i - \tau_i^{social})$$
From our FOCs, we know that $c_i'(\tau_i^{social})$ can be expressed in terms of $c_i'(\tau_i)$: $c_i'(\tau_i^{social}) = \frac{n}{r_i^3}c_i'(\tau_i)$ and so above simplifies further to:
$$\frac{n\tau_ic_i'(\tau_i)}{2} \Biggl[ 1 - \frac{1}{r_i^2}\Biggr] + \frac{n}{r_i^3}c_i'(\tau_i) (\tau_i - \tau_i^{social}) = \frac{n\tau_ic_i'(\tau_i)}{2} \Biggl[ 1 - \frac{1}{r_i^2}\Biggr] + \frac{n}{r_i^3}\tau_i c_i'(\tau_i)(1-r_i) $$

$$\tau_i c'_i(\tau_i) \Biggl[ \frac{n}{2} \biggl(1-\frac{1}{r_i^2}\biggr) + \frac{n}{r_i^3}(1-r_i)\Biggr] = \tau_i c'_i(\tau_i) \Biggl[\frac{n}{2} + \frac{n}{r_i^3} - \frac{2n}{r_i^2} \Biggr]$$

From Corollary \ref{corollary:socialplanner}, the ratio $r_i \leq n^{1/3}$, and since $\tau_i, c_i'(\tau_i)$ are independent of n, we see that asymptotically both terms are of order $\Theta(n)$ for each agent. Summing up over all agents gives a gap of size $\Theta(n^2)$. 

\section{Examples}
\label{examples:app}

\subsection{Specialists vs Generalists Comparison}
\label{examples_spec-vs-gen:app}

We assume that the network consists of only specialists and generalists. Specialists uniformly choose a dimension to learn about and allocate their full precision budget to. 

Thus, for some dimension $d_i$, $\frac{\alpha n}{m}$ agents have a precision of $\frac{1}{\tau_i^2}$ and the remaining $\frac{\alpha n (m-1)}{m}$ specialists have precision $\frac{1}{\underline{\tau}^2}$. All generalists have precision of $\frac{1}{\left(\underline{\tau} + \frac{\tau_i}{m}\right)^2}$. 

The variance of the network consensus on dimension $d_i$ is:
$$\mathrm{Var}[\hat{\theta}_{d_i}] = \mathrm{Var}\left(\sum\limits_{k=1}^n \pi_k s_k\right) = \pi_i^2 \; \mathrm{Var}\left(\sum\limits_{k=1}^n  s_k\right)$$ 

$$\mathrm{Var}\left(\sum\limits_{k=1}^n  s_k\right) = \left( \frac{\alpha n }{m} \cdot \frac{1}{\tau_i^2} + \frac{\alpha n (m-1)}{m} \cdot \frac{1}{\underline{\tau}^2} +  (1-\alpha)n\cdot \frac{1}{\left(\underline{\tau} + \frac{\tau_i}{m}\right)^2}\right)$$

$$\Rightarrow \mathrm{Var}[\hat{\theta}_{d_i}] = \pi_i^2 \left( \frac{\alpha n }{m} \cdot \frac{1}{\tau_i^2} + \frac{\alpha n (m-1)}{m} \cdot \frac{1}{\underline{\tau}^2} +  (1-\alpha)n\cdot \frac{1}{\left(\underline{\tau} + \frac{\tau_i}{m}\right)^2}\right)$$

From a socially optimal perspective, the social planner would want to choose the $\alpha$ that minimizes this expression. The expression is affine in $\alpha$, so the minimizing choice is a boundary case (either $\alpha = 0$ or $\alpha = 1$). Specifically, the FOC yields:

$$\pi_i^2 \cdot n\left( \frac{1}{m \tau_i^2} + \frac{(m-1)}{m \underline{\tau}^2} - \frac{1}{\left(\underline{\tau} + \frac{\tau_i}{m}\right)^2}\right) = 0$$

\noindent If the expression is greater than 0, then the derivative is increasing in $\alpha$ and so $\alpha^* = 0$. 

Note that when $\underline{\tau} << \tau_i$, then the LHS simplifies to:
$$\approx \frac{1}{m \tau_i^2} + \frac{1}{\underline{\tau}^2} - \frac{m^2}{\tau_i^2} > 0$$ and so a network with all generalists is optimal when baseline signals are very imprecise. 

\end{document}